\documentclass[conference, a4paper]{IEEEtran}

\renewcommand\IEEEkeywordsname{Index Terms}

\usepackage{graphicx}
\usepackage{caption}
\usepackage{cite}
\usepackage[hyphens]{url}
\usepackage[nolist]{acronym} 
\usepackage{pgfplots}
\usepgfplotslibrary{groupplots}

\usetikzlibrary{arrows,shapes,graphs,graphs.standard,quotes,arrows.meta,decorations.markings,positioning,calc,patterns,patterns.meta}
\pgfplotsset{compat=newest}
\usepackage[bookmarks=false]{hyperref}
\usepackage{amsmath, amsbsy, amssymb}
\usepackage{amsthm}
\usepackage{multirow}
\usepackage{nicematrix}
\usepackage{booktabs}
\usepackage{nicefrac}

\usepackage{pifont}%
\newcommand{\cmark}{\ding{51}}%
\newcommand{\xmark}{\ding{55}}%

\usepackage[linesnumbered,vlined, ruled]{algorithm2e}

\usepackage{etoolbox}
\makeatletter
\patchcmd\algocf@Vline{\vrule}{\vrule \kern-0.4pt}{}{}
\patchcmd\algocf@Vsline{\vrule}{\vrule \kern-0.4pt}{}{}
\makeatother

\usepackage{marginnote}
\tikzset{>=latex}
\SetAlCapNameFnt{\footnotesize}
\SetAlCapFnt{\footnotesize}
\SetKwProg{Fn}{subroutine}{:}{end}
\let\oldnl\nl%
\newcommand{\nonl}{\renewcommand{\nl}{\let\nl\oldnl}}%
\newcommand{\algrule}[1][.2pt]{\par\vskip.5\baselineskip\hrule height #1\par\vskip.5\baselineskip}
\SetKwBlock{Parallel}{parallel:}{}
\SetKwProg{Task}{Task}{:}{end}
\SetKw{KwSet}{set}
\SetKw{KwAwait}{await}
\SetKw{KwCancel}{cancel}
\SetInd{0.4em}{0.8em}

\definecolor{mblue}{RGB}{0, 126, 198}
\definecolor{ured}{RGB}{238, 28 35}
\definecolor{green}{RGB}{140, 198, 62}
\definecolor{dgreen}{RGB}{34,139,34}
\definecolor{yellow}{RGB}{255, 229, 0}
\definecolor{orange}{RGB}{244, 111, 33}
\definecolor{pink}{RGB}{237, 0, 140}
\definecolor{purple}{RGB}{128, 10, 145}
\definecolor{lgray}{RGB}{224, 224, 224}
\definecolor{mgray}{RGB}{128, 128, 128}
\definecolor{dgray}{RGB}{80,80,80}

\IEEEoverridecommandlockouts

\renewcommand{\vec}[1]{\boldsymbol{#1}}
\newcommand{\vecs}[1]{\boldsymbol{#1}}

\newcommand{\cv}{\vec{c}}

\newcommand{\mv}{\vec{m}}

\newcommand{\rv}{\vec{r}}

\newcommand{\tv}{\vec{t}}
\newcommand{\uv}{\vec{u}}
\newcommand{\vv}{\vec{v}}

\newcommand{\xv}{\vec{x}}
\newcommand{\yv}{\vec{y}}

\newcommand{\zerov}{\vec{0}}
\newcommand{\onev}{\vec{1}}
\newcommand{\ellv}{\vecs{\ell}}

\newcommand{\Gm}{\vec{G}}

\newcommand{\Cc}{{\cal C}}
\newcommand{\Dc}{{\cal D}}

\newcommand{\FF}{\mathbb{F}}

\newcommand{\argmax}{\mathop{\mathrm{argmax}}}

\newtheorem{theorem}{Theorem}

\newtheorem{lemma}{Lemma}

\begin{document}

\begin{NoHyper}
\title{Improved Acceptance Criteria for Speculative Successive Cancellation Decoding of Polar Codes}

\author{\IEEEauthorblockN{Marvin Rübenacke, Ryan Seah, and Warren J. Gross\\}
	\IEEEauthorblockA{
		Department of Electrical and Computer Engineering, McGill University, Montréal, Québec, Canada
		\\\{marvin.ruebenacke, warren.gross\}@mcgill.ca, ryan.seah@mail.mcgill.ca\\
	}
  }

\maketitle

\begin{acronym}
\acro{ML}{maximum likelihood}
\acro{BP}{belief propagation}
\acro{LDPC}{low-density parity-check}
\acro{BER}{bit error rate}
\acro{SNR}{signal-to-noise-ratio}
\acro{BPSK}{binary phase shift keying}
\acro{BI-AWGN}{binary-input additive white Gaussian noise}
\acro{AWGN}{additive white Gaussian noise}
\acro{LLR}{log-likelihood ratio}
\acro{MAP}{maximum a posteriori}
\acro{FER}{frame error rate}
\acro{BLER}{block error rate}
\acro{SCL}{successive cancellation list}
\acro{SC}{successive cancellation}
\acro{BI-DMC}{Binary Input Discrete Memoryless Channel}
\acro{CRC}{cyclic redundancy check}
\acro{CA-SCL}{CRC-aided successive cancellation list}
\acro{BEC}{Binary Erasure Channel}
\acro{BSC}{Binary Symmetric Channel}
\acro{RM}{Reed--Muller}
\acro{SISO}{soft-in/soft-out}
\acro{3GPP}{3rd Generation Partnership Project}
\acro{eMBB}{enhanced Mobile Broadband}
\acro{CN}{check node}
\acro{VN}{variable node}
\acro{URLLC}{ultra-reliable low-latency communications}
\acro{mMTC}{massive machine-type communications}
\acro{DMC}{discrete memoryless channel}
\acro{QC}{quasi-cyclic}
\acro{5G}{fifth generation mobile telecommunication}
\acro{SCAN}{soft cancellation}
\acro{LSB}{least significant bit}
\acro{MSB}{most significant bit}
\acro{PPV}{Polyanskyi-Poor-Verd\'{u}}
\acro{Spec-SC}{speculative successive cancellation}
\acro{SSC}{simplified SC}
\acro{FSSC}{fast simplified SC}
\acro{HD}{Hamming distance}
\acro{ED}{ellipsoidal distance}
\acro{SPC}{single parity-check}
\acro{REP}{repetition}
\end{acronym}

\begin{abstract}
Next-generation data-channel applications of polar codes demand decoding algorithms with high throughput and low latency.
The recently proposed \ac{Spec-SC} decoding reduces average decoding latency by speculatively executing the right branch of \ac{SC} decoding in parallel with the left branch, verifying the result once the g-function is computed.
In this paper, we propose new acceptance criteria based on Hamming distance and ellipsoidal distance that improve acceptance rates, enabling greater speed-ups for lower-rate nodes.
We further show that combining code membership testing with speculative decoding accelerates high-rate nodes as well.
Numerical results confirm that both approaches outperform \ac{FSSC} decoding and \ac{Spec-SC} with the original acceptance condition in terms latency by a wide margin, with virtually identical error-rate performance.

\end{abstract}
\acresetall

\section{Introduction}
Polar codes, introduced by Ar{\i}kan in 2009, are the first class of error-correcting codes proven to achieve the capacity of binary-input memoryless symmetric channels under \ac{SC} decoding \cite{ArikanMain}.
Already standardized for the 5G control channel, polar codes are becoming an attractive candidate for data channel applications of next-generation communication systems \cite{proceedings2024trends}.
Compared to the iterative decoding of \ac{LDPC} codes, \ac{SC} decoding exhibits substantially lower complexity, making polar codes appealing where computational efficiency is critical \cite{krieg2025longpolarvsldpc, huawei2021fastpolarcodesterabitspersecond}.
However, the sequential nature of \ac{SC} decoding limits throughput, since it scales inversely with decoding latency.

Several \ac{SC} variants address this limitation. \Ac{SSC} decoding collapses parts of the decoding tree by identifying rate-0 and rate-1 component codes \cite{simplfiedSC}, while \ac{FSSC} decoding recognizes additional node types that can be decoded directly, further reducing complexity and latency \cite{sarkis2014fast, hanif2017newnodes}. Syndrome-check \ac{SC} decoding \cite{yoo2016checkconstituent, choi2017recusivesyndrome, kim2018earlystopping} dynamically skips recursion when a subtree input is already a valid codeword. More recently, adaptive tree-pruning techniques based on channel quality \cite{singh2025ribrpruning, gunturu2025adaptivepruning} and per-node hard-decision decoding under good channel conditions \cite{sun2024fastdecdoing} have also been proposed.

\Ac{BP} decoding \cite{ArikanBP, ArikanBP_original} offers an alternative that updates messages in parallel, processing left and right branches simultaneously. However, matching the error-correction performance of \ac{SC}-based decoders typically demands many iterations, incurring high complexity. However, early stopping criteria can reduce average latency without sacrificing performance.

\Ac{Spec-SC} decoding~\cite{seah2026specsc} combines the benefits of parallel processing in \ac{BP} decoding with the low complexity of \ac{SC} decoding by speculatively executing the right branch concurrently with the left branch. A verification criterion is used to trigger recomputation when needed, thus preserving the error-correction performance of conventional \ac{SC} decoding. 
Its achievable speedup depends primarily on this verification criterion. In the original formulation~\cite{seah2026specsc}, the speculative right-branch result is accepted only if it exactly matches the hard-decision on  the $g$-function output. While this approach is computationally efficient, it frequently rejects correct speculative decoding results, thereby limiting the achievable decoding speedup.

In this paper, we analyze and improve the \ac{Spec-SC} acceptance criterion. Our contributions are:
\begin{enumerate}
\item We propose accepting the speculative right-branch result when it is sufficiently close (rather than identical) to the $g$-function output, using hard-decision (e.g., Hamming distance) or soft-decision (e.g., ellipsoidal distance) metrics to better capture the right branch's error-correction capability. This yields significantly larger speed-ups, particularly for low-to-medium rate nodes.
\item By decomposing the original verification criterion into distinct cases, we show it triggers only in a subset of the scenarios covered by the syndrome-based pruning method of \cite{yoo2016checkconstituent}. Building on this insight, we introduce a code membership test that further increases the speed-ups of \ac{Spec-SC} for high-rate nodes.
\end{enumerate}

\section{Preliminaries}

\subsection{Polar Codes}
Polar codes are constructed from the polar transform defined by the $N\times N$ polarization matrix $\boldsymbol G_N = \left[\begin{smallmatrix} 1 & 0 \\ 1 & 1 \end{smallmatrix}\right]^{\otimes n}$ with $N=2^n$, where $(\cdot)^{\otimes n}$ denotes the $n$-th Kronecker power.
The polar transform converts $N$ identical channels into polarized synthetic channels that become either highly reliable or highly unreliable as $N$ increases.
The design for an $(N,K)$ code consists of selecting the $K$ most reliable synthetic channels to form the information set, while the remaining channels constitute the frozen set.
This selection is represented by the rate profile $\rv \in \FF_2^N$, where $r_i = 1$ indicates an information channel and $r_i = 0$ a frozen channel.
Encoding is performed by embedding the message vector $\mv \in \FF_2^K$ into the uncoded vector $\uv \in \FF_2^N$ at the positions indexed by $r_i = 1$, while the remaining positions are set to zero. The resulting vector $\uv$ is then transformed using the polar transform to produce the codeword $\cv = \uv \Gm_N$.

\subsection{Successive Cancellation Decoding}
\Ac{SC} decoding, originally proposed for \ac{RM} codes by Schnabl and Bossert in 1995 \cite{schnabl1995}, is a recursive soft-decision decoding algorithm for polar-like codes.
It computes the most likely value of each uncoded bit successively, i.e.,
\begin{equation}
	\hat{u}_i = \argmax_{u_i} \Pr\left(u_i \mid \hat{u}_0, \dots, \hat{u}_{i-1}, \yv \right).
\end{equation}
Algorithm~\ref{alg:fssc} lists the \ac{FSSC} variant of \ac{SC} decoding in \ac{LLR} domain, where the $f$-function and $g$-function used to compute the left and right branches of the factor tree, respectively, are given by 
\begin{align}
    f(a,b) &=  \operatorname{sgn}(a)\operatorname{sgn}(b) \min(|a|,\,|b|), \\
    g(a,b,c) &= (-1)^{c}\cdot a + b. %
\end{align}
Recursion is terminated once one of the easy-to-decode component codes $\Dc$, such as Rate-0, Rate-1, \ac{REP} or \ac{SPC} nodes are reached \cite{sarkis2014fast, hanif2017newnodes}.

\begin{algorithm}[tbh]
	\caption{Fast simplified successive cancellation decoding.}\label{alg:fssc}
	\SetAlgoVlined
	\LinesNumbered
	\SetKwInOut{Input}{Input}\SetKwInOut{Output}{Output}
	\Input{Channel LLR vector $\ellv$, rate profile $\rv$, set of leaf decoders $\Dc$.}
	\Output{Codeword estimate $ \hat{\cv} $.}
	$\hat{\cv} = \operatorname{FSSC}(\ellv,\,\rv)$\;
	\algrule[.5pt]
    \Fn{$\operatorname{FSSC}(\ellv,\,\rv)$} {
		$N \gets \operatorname{length}(\ellv)$\;
		\eIf{leaf decoder for $\rv$ in $\Dc$}{
			\Return $\operatorname{decode}(\ellv, \rv)$ \;
		} {
			$\hat{\cv}_0 \gets \operatorname{FSSC}(f(\ellv_{0:\nicefrac{N}{2}},\, \ellv_{\nicefrac{N}{2}:N}),\,\rv_{0:\nicefrac{N}{2}})$\;
			$\hat{\cv}_1 \gets \operatorname{FSSC}(g(\ellv_{0:\nicefrac{N}{2}},\,\ellv_{\nicefrac{N}{2}:N},\, \hat{\cv}_0),\,\rv_{\nicefrac{N}{2}:N})
			$\;
			\Return $(\hat{\cv}_0 \oplus \hat{\cv}_1 \mid \hat{\cv}_1)$\;
		}
	}
\end{algorithm}

\subsection{Speculative Successive Cancellation Decoding}

\begin{algorithm}[tbh]
	\caption{Speculative successive cancellation decoding.}\label{alg:specsc}
	\SetAlgoVlined
	\LinesNumbered
	\SetKwInOut{Input}{Input}\SetKwInOut{Output}{Output}
	\Input{Channel LLR vector $\ellv$, rate profile $\rv$, set of leaf decoders $\Dc$.}
	\Output{Codeword estimate $\hat{\cv}$.}
	$\hat{\cv} = \operatorname{Spec-SC}(\ellv,\,\rv)$\;
	\algrule[.5pt]
	\Fn{$\operatorname{Spec-SC}(\ellv,\,\rv)$} {
		$N \gets \operatorname{length}(\ellv)$\;
		\eIf{leaf decoder for $\rv$ in $\Dc$}{
			\Return $\operatorname{decode}(\ellv, \rv)$ \;
		} {
			\Parallel{
				\Task{\normalfont 1}{
					$\hat{\cv}_0 \gets \operatorname{Spec-SC}(f(\ellv_{0:\nicefrac{N}{2}},\, \ellv_{\nicefrac{N}{2}:N}),\,\rv_{0:\nicefrac{N}{2}})$\;
                    $\ellv_g \gets g(\ellv_{0:\nicefrac{N}{2}},\,\ellv_{\nicefrac{N}{2}:N},\, \hat{\cv}_0)$\;
                    \KwSet flag$_1$\;
					$\hat{\cv}_1 \gets \operatorname{Spec-SC}(\ellv_g,\,\rv_{\nicefrac{N}{2}:N})$\;
                    \KwCancel Task 2\;
                    \Return $(\hat{\cv}_0 \oplus \hat{\cv}_1 \mid \hat{\cv}_1)$\;
				}
                \Task{\normalfont 2}{
					$\tilde{\cv}_1 \gets \operatorname{FSSC}( \ellv_{\nicefrac{N}{2}:N},\,\rv_{\nicefrac{N}{2}:N})$\;
                    \KwAwait ($\text{flag}_1$)\; %
					\If{$\tilde{\cv}_1 = \operatorname{hard}(\ellv_g)$}{
                        \KwCancel Task 1\;
						\Return $(\hat{\cv}_0 \oplus \tilde{\cv}_1 \mid \tilde{\cv}_1)$\;
					}
				}
			}
		}
	}
\end{algorithm}

\begin{figure*}[t]
    \centering
    \begin{tikzpicture}
\begin{groupplot}[
    group style={group size=3 by 1, y descriptions at=edge left, horizontal sep=1cm},
    width=\linewidth/3,
    height=0.23\linewidth,
    legend columns=3,
    legend style={/tikz/every even column/.append style={column sep=0.5cm}},
    ylabel={Probability},
    xlabel={$E_\mathrm{s}/N_0$ in dB},
    xmajorgrids,
	yminorticks=true,
	ymajorgrids,
    grid style={dotted,dgray},
    ymin=0,
	ymax=1
]

\nextgroupplot[title={$(1024,512)$, $R=\nicefrac{1}{2}$},
legend to name=polarcoderulez,
xmin=-1,
xmax=6]
\addplot [mgray, line width=0.7pt, dashed, mark=*, mark size=1pt] 
table[col sep=comma]{%
-1.000, 2.000e-03
-0.500, 5.000e-03
0.000, 4.700e-02
0.500, 2.300e-01
1.000, 5.610e-01
1.500, 8.280e-01
2.000, 9.610e-01
2.500, 9.930e-01
3.000, 9.980e-01
3.500, 1.000e+00
4.000, 1.000e+00
4.500, 1.000e+00
5.000, 1.000e+00
5.500, 1.000e+00
6.000, 1.000e+00
};
\addlegendentry{speculation correct: $\Tilde{\cv}_1 = \cv_1$}
\addplot [green, line width=1pt, mark=x] 
table[col sep=comma]{%
-1.000, 0.000
-0.500, 0.000
0.000, 0.000
0.500, 0.000
1.000, 0.001
1.500, 0.012
2.000, 0.042
2.500, 0.139
3.000, 0.305
3.500, 0.500
4.000, 0.685
4.500, 0.826
5.000, 0.913
5.500, 0.963
6.000, 0.986
};
\addlegendentry{g correct: $\operatorname{hard}(\ellv_g) = \cv_1$}

\addplot [ured, line width=1pt, mark=diamond, mark size=3pt] 
table[col sep=comma]{%
-1.000, 0.000
-0.500, 0.000
0.000, 0.000
0.500, 0.000
1.000, 0.001
1.500, 0.011
2.000, 0.041
2.500, 0.139
3.000, 0.305
3.500, 0.500
4.000, 0.685
4.500, 0.826
5.000, 0.913
5.500, 0.963
6.000, 0.986
};
\addlegendentry{original \cite{seah2026specsc}: $\Tilde{\cv}_1 = \operatorname{hard}(\ellv_g)$}

\nextgroupplot[title={$(1024,768)$, $R=\nicefrac{3}{4}$},
xmin=1,
xmax=6]
\addplot [mgray, line width=0.7pt, dashed, mark=*, mark size=1pt] 
table[col sep=comma]{%
1.000, 0.000
1.500, 0.000
2.000, 0.001
2.500, 0.031
3.000, 0.139
3.500, 0.436
4.000, 0.748
4.500, 0.935
5.000, 0.986
5.500, 0.999
6.000, 1.000
};
\addplot [green, line width=1pt, mark=x] 
table[col sep=comma]{%
1.000, 1.000e-03
1.500, 1.100e-02
2.000, 4.200e-02
2.500, 1.380e-01
3.000, 3.050e-01
3.500, 5.000e-01
4.000, 6.850e-01
4.500, 8.260e-01
5.000, 9.130e-01
5.500, 9.630e-01
6.000, 9.860e-01
};

\addplot [ured, line width=1pt, mark=diamond, mark size=3pt] 
table[col sep=comma]{%
1.000, 0.000
1.500, 0.000
2.000, 0.000
2.500, 0.009
3.000, 0.057
3.500, 0.244
4.000, 0.540
4.500, 0.780
5.000, 0.903
5.500, 0.962
6.000, 0.986
};

\nextgroupplot[title={$(1024,896)$, $R=\nicefrac{7}{8}$},
xmin=2,
xmax=7]
\addplot [mgray, line width=0.7pt, dashed, mark=*, mark size=1pt]
table[col sep=comma]{%
2.000, 0.000
2.500, 0.000
3.000, 0.001
3.500, 0.013
4.000, 0.071
4.500, 0.222
5.000, 0.503
5.500, 0.746
6.000, 0.909
6.500, 0.980
7.000, 1.000
};
\addplot [green, line width=1pt, mark=x] 
table[col sep=comma]{%
2.000, 0.000
2.500, 0.035
3.000, 0.214
3.500, 0.466
4.000, 0.682
4.500, 0.825
5.000, 0.913
5.500, 0.963
6.000, 0.986
6.500, 0.994
7.000, 0.999
};

\addplot [ured, line width=1pt, mark=diamond, mark size=3pt] 
table[col sep=comma]{%
2.000, 0.000
2.500, 0.000
3.000, 0.000
3.500, 0.008
4.000, 0.057
4.500, 0.193
5.000, 0.470
5.500, 0.725
6.000, 0.898
6.500, 0.975
7.000, 0.999
};

\end{groupplot}

\node[below=0.1em,inner sep=1pt, xshift=1.6em] at(current bounding box.south) {\pgfplotslegendfromname{polarcoderulez}};
\end{tikzpicture}
    \caption{\footnotesize Probability of the components of the original verification criterion at the root node of \ac{Spec-SC} decoding for the 5G polar code of length $N=1024$.}
    \label{fig:example1024}
\end{figure*}

\Ac{Spec-SC} decoding was proposed in \cite{seah2026specsc} and is summarized in Algorithm~\ref{alg:specsc}.
At each node in the decoding tree, instead of waiting for the result of the first branch to compute the $g$-function, it invokes the right-branch decoding directly on the second half of the \ac{LLR} vector $\ellv$ to be processed in parallel to the left-branch decoding.
As the speculative right branch does not receive the 3\,dB gain of the $g$-function for the second component code, its result $\tilde{\cv}_1 =\operatorname{SC}_1(\ellv_{\nicefrac{N}{2}:N})$ is checked using the verification function 
\begin{equation}\label{eq:ryanscondition}
\tilde{\cv}_1 \stackrel{?}{=} \operatorname{hard}(\ellv_g)
\end{equation}
where $\ellv_g = g(\ellv_0, \ellv_1,\hat{\cv}_0)$ is the value of the $g$-function after the left branch decoded $\hat{\cv}_0$, and the hard decision function is given by
\begin{equation}
\operatorname{hard}(x) = 
\begin{cases}
0, & \text{if }  x \ge 0, \\
1, & \text{otherwise}.
\end{cases}
\end{equation}
Whenever the verification criterion is fulfilled, we know from the following lemma that the output of the speculative branch is exactly the result that conventional \ac{SC} decoding would have produced; therefore, the conventional decoding can be safely aborted.
Otherwise, the conventional decoding is continued until its result is available.
Hence, \ac{Spec-SC} decoding always outputs the same codeword estimate as conventional \ac{SC} decoding and, consequently, has identical error-correction performance \cite[Theorem 1]{seah2026specsc}.
\begin{lemma}\label{lemma:scio}
Let $\Cc$ be a polar code with length $N$ and $\operatorname{SC}(\cdot)$ its SC decoding function, and let $\ellv \in \mathbb{R}^N$ be an LLR vector. If $\xv:=\operatorname{hard}(\ellv) \in \Cc$, then $\operatorname{SC}(\ellv)= \xv$.
\end{lemma}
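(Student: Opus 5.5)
We prove the statement by induction on the code length $N=2^n$, following the recursive structure of $\operatorname{SC}$ in Algorithm~\ref{alg:fssc}.

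\textbf{Base case.} For $N=1$ the code is either Rate-0 ($\Cc=\{0\}$) or Rate-1 ($\Cc=\{0,1\}$).
\begin{itemize}
\item In the Rate-0 case, $\xv\in\Cc$ forces $\xv=0$, and SC outputs the frozen value $0$.
\item In the Rate-1 case, SC outputs $\operatorname{hard}(\ell)$, which equals $\xv$.
\end{itemize}
If leaf decoders in $\Dc$ (REP, SPC, Rate-1, Rate-0) are allowed, each of them is an ML decoder of its component code. For these codes, a hard-decision vector that is already a codeword is the ML output: it maximizes $\sum_i (-1)^{x_i}\ell_i/2$ over all of $\FF_2^N$, so it also maximizes it over $\Cc$. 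Ties ($\ell_i=0$) are assumed to be broken consistently with $\operatorname{hard}$. Hence the claim holds at every leaf as well.

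\textbf{Inductive step.} Write $\ellv=(\ellv_0\mid\ellv_1)$ and $\xv=(\xv_0\oplus\xv_1\mid\xv_1)$. Membership $\xv\in\Cc$ means $\xv_0\oplus\xv_1\in\Cc_0$ and $\xv_1\in\Cc_1$, where $\Cc_0,\Cc_1$ are the polar codes with rate profiles $\rv_{0:N/2}$ and $\rv_{N/2:N}$. Also $\operatorname{hard}(\ellv_0)=\xv_0\oplus\xv_1$ and $\operatorname{hard}(\ellv_1)=\xv_1$.

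\begin{enumerate}
\item \emph{Left branch.} Since $\operatorname{sgn}f(a,b)=\operatorname{sgn}(a)\operatorname{sgn}(b)$, we get $\operatorname{hard}(f(\ellv_0,\ellv_1))=\operatorname{hard}(\ellv_0)\oplus\operatorname{hard}(\ellv_1)=\xv_0\oplus\xv_1\cdot$ wait, more precisely $=(\xv_0\oplus\xv_1)\oplus\xv_1=\xv_0$. Hence we actually need $\xv_0$ to be the left codeword. To set this up correctly, decompose $\xv=(\vv\oplus\uv\mid\uv)$ with $\uv=\xv_1$ and $\vv\oplus\uv$ equal to the first half. The polar structure $\cv=(\cv_0\oplus\cv_1\mid\cv_1)$ gives $\cv_0=\vv\in\Cc_0$. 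So $\operatorname{hard}(f(\ellv_0,\ellv_1))=\vv\in\Cc_0$, and the induction hypothesis yields $\hat{\cv}_0=\vv$.
\item \emph{Right branch.} Next, $\ellv_g=(-1)^{\hat{\cv}_0}\ellv_0+\ellv_1$. In each coordinate, $(-1)^{v_i}\ell_{0,i}$ has the sign of $(-1)^{v_i\oplus v_i\oplus u_i}=(-1)^{u_i}$, which matches the sign of $\ell_{1,i}$. A sum of two reals with the same sign keeps that sign, so $\operatorname{hard}(\ellv_g)=\uv\in\Cc_1$. By induction, $\hat{\cv}_1=\uv$.
\item \emph{Combination.} The output is $(\vv\oplus\uv\mid\uv)=\xv$, which completes the induction.
\end{enumerate}

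\textbf{Main obstacle.} The only delicate point is zeros in the LLRs. The convention $\operatorname{hard}(0)=0$ must be consistent with two facts:
\begin{itemize}
\item $\operatorname{sgn}(0)$ inside $f$, where $f$ may return $0$ while the XOR of the two hard decisions is $1$;
\item the sum in $g$, where two agreeing signs may include zeros.
\end{itemize}
I would handle this either by assuming that the LLRs are almost surely nonzero (continuous channel), or by checking that the cases with $|\ell|=0$ still preserve the hard decision. The latter would use the rule that ties are resolved in favour of the hard-decision value.
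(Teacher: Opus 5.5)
Your final argument is the paper's induction, with the names $\uv,\vv$ swapped and a harmless ML extension to fast leaf nodes: the sign of $f$ is the XOR of the two hard decisions, so the left branch returns the left component codeword, and the two $g$ summands then agree in sign, so the right branch returns the right component. Delete the garbled first decomposition, where $\xv_0\oplus\xv_1\in\Cc_0$ contradicts your own notation $\xv=(\xv_0\oplus\xv_1\mid\xv_1)$.

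On zeros you are more careful than the paper, but commit to the nonzero-LLR hypothesis. It propagates through $f$, and through $g$ because both summands share a sign. Drop the tie-breaking alternative, which cannot repair the $f$-step: plain SC with $N=2$, $\rv=(1,1)$, $\ellv=(0,-1)$ returns $(1,1)\neq\operatorname{hard}(\ellv)=(0,1)$, so the lemma as stated genuinely requires nonzero LLRs.
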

\begin{proof}
By induction. 

For the base case $N=1$, we have two cases: $\rv=\zerov$ or $\rv=\onev$.
If $\rv=\zerov$, $\operatorname{SC}(\ellv)=\boldsymbol{0}\:\forall \ellv$, so clearly also for $\ell_0\ge0$.
For $\rv=\onev$, $\operatorname{SC}(\ellv) = \operatorname{hard}(\ellv) $, i.e., exactly the decision function for the information bit.

Induction step: Let $\ellv=(\ellv_0\mid\ellv_1)$ and $\xv=(\xv_0\mid\xv_1)$. As $\xv$ is a polar codeword, we can write it as 
$\xv = (\uv \oplus \vv \mid \vv )$, with 
$\uv = (\xv_0 \oplus \xv_1) \in \Cc_0$ and 
$\vv=\xv_1\in \Cc_1$, where $\Cc_0$ and $\Cc_1$ are the component codes corresponding to the left and right branch, respectively. Let $\ellv_{f}= f(\ellv_0, \ellv_1)$. By the definition of the $f$-function, $\operatorname{hard}(\ellv_{f}) = \operatorname{hard}(\ellv_0)\oplus \operatorname{hard}(\ellv_1)=\xv_0 \oplus \xv_1=\uv\in \Cc_0$ and the conditions for the left branch are satisfied. Applying the induction hypothesis, we have $\operatorname{SC}_0(\ellv_{f})=\uv$.
Let $\ellv_{g}= g(\ellv_0, \ellv_1,\uv)$.
For the right branch, observe that as $\operatorname{hard}(\ellv_0)\oplus \operatorname{hard}(\ellv_1)=\uv$, both terms in the $g$-function have the same sign, and $\operatorname{hard}(\ellv_{g}) = \operatorname{hard}(\ellv_1) =\vv \in \Cc_1$ holds. Again using the induction hypothesis, we have $\operatorname{SC}_1(\ellv_{g})=\vv$. Finally, $\operatorname{SC}(\ellv)=(\uv \oplus \vv \mid \vv ) = \xv$.
\end{proof}

Continuing the normal \ac{SC} decoding procedure without waiting for the speculative right branch to complete is essential for maximizing speed-up, particularly for highly imbalanced trees. Furthermore, although the speculative right branch could in principle spawn further speculative decoding within itself, doing so typically yields poor success rates due to its 3\,dB worse operating point. Consequently, only the recursions along the main path (Task 1) invoke the speculative variants.

\section{Improved Acceptance Criteria}
\begin{figure*}[t]
    \centering
    \resizebox{\linewidth}{!}{\begin{tikzpicture}[
 node distance = 8mm and 8mm]
	\tikzset{
		rect/.style={rectangle, draw, minimum width=1.2cm, fill=gray!20, font=\footnotesize, anchor=north},
        onecyc/.style={rectangle, draw, minimum width=1.2cm, minimum height=0.4cm, inner sep=0pt, fill=gray!20, font=\footnotesize,anchor=north},
        cond/.style={rectangle, draw, minimum width=1.2cm, minimum height=0.4cm, inner sep=0pt, font=\footnotesize,anchor=north},
        dspconn/.style={
        -{Triangle[length=1.4mm, width=1mm]}, %
        line cap=round,
        line join=round
      }
	}

    \draw[dspconn] (-2, 0) -> node[midway, above, rotate=90]{time} (-2, -4.0);

    \begin{scope}[xshift=0cm]
    \node[anchor=west] at (-1.75,.4) {\footnotesize (a) No termination};
    \node[onecyc] (f) at (-.6, 0) {$f$};
    \node[rect, minimum height=2.0cm] (l) at (-.6, -.4) {$\operatorname{SC}_0^*$};
    \node[onecyc] (g) at (-.6, -2.4) {$g$};
    \node[rect, minimum height=2.0cm]  at (.6, 0) {$\operatorname{SC}_1^\mathrm{spec}$};
    \draw[dspconn] (0.6, -2.0) -> (0.6, -2.8);
    \node[rect, minimum height=1.2cm] (r) at (-.6, -2.8) {$\operatorname{SC}_1^*$};
    \draw[very thick, dashed] (0, -2.8) -- (1.2, -2.8);   
    \node[cond, fill=ured!40] at (1.8, -2.8) {A or B \xmark};
    \node[left of=f,anchor=east,inner sep=0pt] {\footnotesize$\tau_f$};
    \node[left of=l,anchor=east,inner sep=0pt] {\footnotesize$\tau_0^*$};
    \node[left of=g,anchor=east,inner sep=0pt] {\footnotesize$\tau_g$};
    \node[left of=r,anchor=east,inner sep=0pt] {\footnotesize$\tau_1^*$};

    \node at (0.6, -4.4) {$\tau = \tau_f + \tau_0^* + \tau_g + \tau_1^*$};
    \end{scope}     

    \begin{scope}[xshift=5cm]
    \node[anchor=west] at (-1.75,.4) {\footnotesize (b) Left-dominated termination};
    \node[onecyc] (f) at (-.6, 0) {$f$};
    \node[rect, minimum height=2.0cm] (l) at (-.6, -.4) {$\operatorname{SC}_0^*$};
    \node[onecyc] (g) at (-.6, -2.4) {$g$};
    \node[rect, minimum height=2.0cm]  at (.6, 0) {$\operatorname{SC}_1^\mathrm{spec}$};
    \draw[dspconn] (0.6, -2.0) -> (0.6, -2.8);
    \node[rect, minimum height=1.2cm] at (-.6, -2.8) {$\operatorname{SC}_1^*$};
    \fill[pattern=north east lines] (-1.2,-3.2) rectangle (0,-4.0);
    \draw[very thick, dashed] (0, -2.8) -- (1.2, -2.8);   
    \node[cond, fill=green!40] (c) at (1.8, -2.8) {A or B \cmark};
    \draw[dspconn, dashed] (1.2, -3.2) -> node[below=-2pt]{\footnotesize terminate} (0, -3.2);
    \node[left of=f,anchor=east,inner sep=0pt] {\footnotesize$\tau_f$};
    \node[left of=l,anchor=east,inner sep=0pt] {\footnotesize$\tau_0^*$};
    \node[left of=g,anchor=east,inner sep=0pt] {\footnotesize$\tau_g$};
    \node[left of=c,xshift=-2.4cm,anchor=east, inner sep=0pt] {\footnotesize$\tau_\mathrm{AB}$};
    \node at (0.6, -4.4) {$\tau = \tau_f + \tau_0^* + \tau_g + \tau_\mathrm{AB}$};
    \end{scope}    

    \begin{scope}[xshift=10cm]
    \node[anchor=west] at (-1.75,.4) {\footnotesize (c) Right-dominated late termination};
    \node[onecyc] (f) at (-.6, 0) {$f$};
    \node[rect, minimum height=1.2cm] (l) at (-.6, -.4) {$\operatorname{SC}_0^*$};
    \node[onecyc] (g) at (-.6, -1.6) {$g$};
    \node[rect, minimum height=3.2cm] (r) at (.6, 0) {$\operatorname{SC}_1^\mathrm{spec}$};
    \node[rect, minimum height=2cm] at (-.6, -2) {$\operatorname{SC}_1^*$};
    \fill[pattern=north east lines] (-1.2,-3.6) rectangle (0,-4);
    \draw[very thick, dashed] (0, -2) -- (1.2, -2);
    \node[cond, fill=ured!40] at (1.8, -2) {A \xmark};
    \node[cond, fill=green!40] (c) at (1.8, -3.2) {B \cmark};
    \draw[dspconn, dashed] (1.2, -3.6) -> node[below=-2pt]{\footnotesize terminate} (0, -3.6);
    \node[left of=r,xshift=-1.2cm,anchor=east,inner sep=0pt] {\footnotesize$\tau_1$};
    \node[left of=c,xshift=-2.4cm,anchor=east,inner sep=0pt] {\footnotesize$\tau_\mathrm{B}$};
    \node at (0.6, -4.4) {$\tau = \tau_1 + \tau_\mathrm{B}$};
    
    \end{scope} 

    \begin{scope}[xshift=15cm]
    \node[anchor=west] at (-1.75,.4) {\footnotesize (d) Right-dominated early termination};
    \node[onecyc] (f) at (-.6, 0) {$f$};
    \node[rect, minimum height=1.2cm] (l) at (-.6, -.4) {$\operatorname{SC}_0^*$};
    \node[onecyc] (g) at (-.6, -1.6) {$g$};
    \node[rect, minimum height=3.2cm]  at (.6, 0) {$\operatorname{SC}_1^\mathrm{spec}$};
    \node[rect, minimum height=2cm] (r) at (-.6, -2) {$\operatorname{SC}_1^*$};
    \fill[pattern=north east lines] (-1.2,-2.4) rectangle (0,-4);
    \fill[pattern=north east lines] (0,-2.4) rectangle (1.2,-3.2);
    \draw[very thick, dashed] (0, -2) -- (1.2, -2);
    \node[cond, fill=green!40] (c) at (1.8, -2) {A \cmark};
    \node[left of=f,anchor=east,inner sep=0pt,anchor=east,inner sep=0pt] {\footnotesize$\tau_f$};
    \node[left of=l,anchor=east,inner sep=0pt] {\footnotesize$\tau_0^*$};
    \node[left of=g,anchor=east,inner sep=0pt] {\footnotesize$\tau_g$};
    \node[left of=c,xshift=-2.4cm,anchor=east,inner sep=0pt] {\footnotesize$\tau_\mathrm{A}$};
    \node at (0.6, -4.4) {$\tau = \tau_f + \tau_0^* + \tau_g + \tau_\mathrm{A}$};
    
    \end{scope}

\end{tikzpicture}}
    \caption{\footnotesize Timing of the concurrent decoding paths in the proposed \ac{Spec-SC} decoding of a single node for different scenarios. The speculative right branch, denoted by $\operatorname{SC}_1^\mathrm{spec}$, implements conventional \ac{FSSC}, while the recursive calls marked with an asterisk ($^*$) may include an internal speculation/early-termination mechanism. The hatched regions indicate computation terminated once at least one of the acceptance criteria triggered, reducing the node's decoding time.}
    \label{fig:timing}
\end{figure*}

The trigger probability of the original verification criterion is limited by two factors. First, for high-rate component codes, speculative execution of the right branch is more likely to fail because it is highly sensitive to channel noise in the absence of the additional 3\,dB gain provided by the left branch, whereas the output of the $g$-function generally constitutes a more reliable estimate. Second, for lower-rate component codes with stronger error-correction capability, \eqref{eq:ryanscondition} %
often fails to recognize that $\tilde{\cv}_1$ is correct because $\operatorname{hard}(\ellv_g) \neq \tilde{\cv}_1$. Indeed, \eqref{eq:ryanscondition} %
can trigger only when a subsequent decoding of $\ellv_g$ would not need to correct any errors. 

Fig.~\ref{fig:example1024} illustrates these limitations for the example of \ac{Spec-SC} decoding the root node of the 5G polar code with length $N=1024$. The original success probability $\operatorname{Pr}(\operatorname{hard}(\ellv_g)=\tilde{\cv}_1)$ is upper-bounded by both the correctness of the speculative result $\operatorname{Pr}(\tilde{\cv}_1=\cv_1)$ and of the $g$-function $\operatorname{Pr}(\operatorname{hard}(\ellv_g)=\cv_1)$, which dominate for the high-rate and medium rate codes, respectively. Due to the recursive structure of polar codes, nodes deeper in the tree behave similarly in qualitative terms, though at different operating points.

In the following, we introduce two additional acceptance criteria with higher trigger probabilities, enabling larger decoding speed-ups while preserving identical or comparable error-correcting performance. We refer to these as \emph{acceptance criteria} rather than \emph{verification criteria}, since they do not require the speculative branch to produce the same result as conventional \ac{SC} decoding.

\subsection{Code Membership-based Acceptance}
To tackle the first issue, we establish the following properties of \ac{Spec-SC} with the original acceptance criterion \eqref{eq:ryanscondition}.
\begin{lemma}\label{lemma:trigger}
For the original condition \eqref{eq:ryanscondition} to trigger, $\operatorname{hard}(\ellv_g)$ must be a codeword of the right component code $\Cc_1$, i.e.,
\begin{equation}
    \tilde{\cv}_1 = \operatorname{hard}(\ellv_g) \implies \operatorname{hard}(\ellv_g) \in \Cc_1.
\end{equation}
\end{lemma}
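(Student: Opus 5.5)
The argument only needs the fact that every output of an SC-type decoder is a codeword of the code it decodes. The speculative estimate $\tilde{\cv}_1$ is produced by $\operatorname{FSSC}(\ellv_{\nicefrac{N}{2}:N},\rv_{\nicefrac{N}{2}:N})$, i.e., by SC-type decoding of the right component code $\Cc_1$ with rate profile $\rv_{\nicefrac{N}{2}:N}$. If we show $\tilde{\cv}_1\in\Cc_1$, then the hypothesis $\tilde{\cv}_1=\operatorname{hard}(\ellv_g)$ gives $\operatorname{hard}(\ellv_g)=\tilde{\cv}_1\in\Cc_1$ by substitution.

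We prove the auxiliary claim by induction on the node length $N$: for every LLR input and rate profile $\rv$, $\operatorname{FSSC}(\ellv,\rv)$ (equivalently $\operatorname{SC}$) returns a codeword of the polar code defined by $\rv$.

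\emph{Leaf case.} The recursion stops at a leaf decoder in $\Dc$, such as Rate-0, Rate-1, REP or SPC, or at $N=1$. Each of these decoders returns a codeword of its component code by construction:
\begin{itemize}
\item Rate-0 returns the all-zero word.
\item Rate-1 returns a hard decision, and every word belongs to the code.
\item REP returns the all-zero or all-one word.
\item SPC returns a word with even parity, after flipping the least reliable bit if needed.
\end{itemize}
This requirement is part of what it means for a leaf decoder to be admissible in $\Dc$. For $N=1$ it is exactly the base case argument already used for Lemma~\ref{lemma:scio}.

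\emph{Induction step.} The left call returns $\hat{\cv}_0\in\Cc_0$ and the right call returns $\hat{\cv}_1\in\Cc_1$, whatever the input LLRs to these calls are. The output $(\hat{\cv}_0\oplus\hat{\cv}_1\mid\hat{\cv}_1)$ then has the $(\uv\oplus\vv\mid\vv)$ Plotkin structure with $\uv\in\Cc_0$ and $\vv\in\Cc_1$. This structure characterises the polar code with rate profile $(\rv_{0:\nicefrac{N}{2}}\mid\rv_{\nicefrac{N}{2}:N})$, as also used in the proof of Lemma~\ref{lemma:scio}, so the output is a codeword.

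The only point needing care is the leaf case. One must note that the lemma implicitly assumes every leaf decoder in $\Dc$ outputs a valid codeword of its constituent code. Beyond that, the argument is a direct structural induction followed by substitution.
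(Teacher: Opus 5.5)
Your proposal is correct and takes the same approach as the paper, which simply observes that $\tilde{\cv}_1$ is the output of SC-type decoding of $\Cc_1$ and that SC is a complete decoder (it always returns a codeword), then substitutes $\operatorname{hard}(\ellv_g)=\tilde{\cv}_1$. Your structural induction over the leaf decoders and the $(\uv\oplus\vv\mid\vv)$ step only spells out the completeness property that the paper takes for granted, including the implicit assumption that every decoder in $\Dc$ returns a valid constituent codeword.
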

\begin{proof}
    The result follows immediately from $\operatorname{hard}(\ellv_g) = \tilde{\cv}_1 = \operatorname{SC}_1(\ellv_{\nicefrac{N}{2}:N})$ and \ac{SC} being a complete decoder.
\end{proof}

\begin{lemma}\label{lemma:same}
If $\tilde{\cv}_1 \ne \operatorname{hard}(\ellv_g)$ and $\operatorname{hard}(\ellv_g) \in \Cc_1$ then $\operatorname{SC}_1(\ellv_g) = \operatorname{hard}(\ellv_g)$.
\end{lemma}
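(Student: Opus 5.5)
The claim is a direct application of Lemma~\ref{lemma:scio} to the right component code. The hypothesis $\tilde{\cv}_1 \ne \operatorname{hard}(\ellv_g)$ plays no role in the argument; it only marks this lemma as the complementary case to Lemma~\ref{lemma:trigger}, namely the situation where the original criterion fails to trigger.

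First, I set up the objects. The right branch of SC decoding at the current node is itself an SC decoder for the component code $\Cc_1$. This code has length $N/2$ and rate profile $\rv_{\nicefrac{N}{2}:N}$. Its input is the LLR vector $\ellv_g = g(\ellv_{0:\nicefrac{N}{2}},\ellv_{\nicefrac{N}{2}:N},\hat{\cv}_0) \in \mathbb{R}^{N/2}$. Since $\Cc_1$ is again a polar code, with frozen positions given by the subvector of the rate profile, Lemma~\ref{lemma:scio} applies to it with $\operatorname{SC}_1$ as its SC decoding function.

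Second, I verify the hypothesis of Lemma~\ref{lemma:scio}. Set $\xv := \operatorname{hard}(\ellv_g)$. By assumption $\xv \in \Cc_1$, so Lemma~\ref{lemma:scio} gives $\operatorname{SC}_1(\ellv_g) = \xv = \operatorname{hard}(\ellv_g)$, which is the claim.

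The only point needing care is that Lemma~\ref{lemma:scio} was stated for SC decoding. Here $\operatorname{SC}_1$ may be realized by FSSC, using leaf decoders such as Rate-0, Rate-1, REP or SPC, or by Spec-SC. I would add a brief remark covering both:
\begin{itemize}
\item FSSC produces the same output as SC, because each leaf decoder is ML for its node and so returns $\operatorname{hard}$ of its input whenever that hard decision is a codeword.
\item Spec-SC returns the same estimate as SC by \cite[Theorem 1]{seah2026specsc}.
\end{itemize}
Hence the identity carries over to whichever decoder realizes $\operatorname{SC}_1$.
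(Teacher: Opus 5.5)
Your proof is correct and follows the paper's argument: both apply Lemma~\ref{lemma:scio} directly to the component code $\Cc_1$ with $\xv = \operatorname{hard}(\ellv_g)$, and both note that the hypothesis $\tilde{\cv}_1 \ne \operatorname{hard}(\ellv_g)$ plays no role. Your extra remark about FSSC and Spec-SC realizations is not needed, since the lemma concerns $\operatorname{SC}_1$; note also that \cite[Theorem 1]{seah2026specsc} covers only Spec-SC with the original verification criterion, not variants that use the distance-based condition B.
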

\begin{proof}
   From Lemma~\ref{lemma:scio}, $\operatorname{SC}_1(\ellv_g) = \operatorname{hard}(\ellv_g)$ if $\operatorname{hard}(\ellv_g) \in \Cc_1$, regardless of $\tilde{\cv}_1 \ne \operatorname{hard}(\ellv_g)$.
\end{proof}

Based on these two properties, we define a new acceptance criterion solely based on checking code membership of $\operatorname{hard}(\ellv_g)$.
In particular, we accept  $\operatorname{hard}(\ellv_g)$ as the correct estimate for $\hat{\cv}_1$, if
\begin{equation} \label{eq:cwcondition}
\operatorname{hard}(\ellv_g) \stackrel{?}{\in} \Cc_1.
\end{equation}
In the following, we refer to this criterion as \emph{condition A}.
\begin{theorem}
The acceptance criterion \eqref{eq:cwcondition} triggers at least as frequently as \eqref{eq:ryanscondition}. Furthermore, the outputs of both decoder variants are identical. 
\end{theorem}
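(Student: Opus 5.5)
The plan is to prove the two claims separately. Both reduce almost immediately to Lemma~\ref{lemma:trigger}, Lemma~\ref{lemma:same}, and Lemma~\ref{lemma:scio}, applied at a single node of the decoding tree. We then lift the node-level statement to the whole decoder by induction over the recursion.

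For the first claim (trigger frequency), we compare the events in which each criterion fires, for a fixed input $\ellv$ at a given node and the same left-branch estimate $\hat{\cv}_0$, hence the same $\ellv_g$. By Lemma~\ref{lemma:trigger}, the event $\{\tilde{\cv}_1 = \operatorname{hard}(\ellv_g)\}$ is contained in the event $\{\operatorname{hard}(\ellv_g)\in\Cc_1\}$. Pointwise set inclusion gives the probability inequality, since \eqref{eq:cwcondition} holds whenever \eqref{eq:ryanscondition} does. The inclusion can be strict: this is exactly the case $\tilde{\cv}_1\neq\operatorname{hard}(\ellv_g)\in\Cc_1$ treated in Lemma~\ref{lemma:same}.

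For the second claim (identical outputs), we show that at every node where either criterion accepts, the accepted right-branch estimate equals $\operatorname{SC}_1(\ellv_g)$, the value conventional decoding would produce. We split into three cases.
\begin{enumerate}
\item \eqref{eq:ryanscondition} triggers. The original decoder returns $\tilde{\cv}_1=\operatorname{hard}(\ellv_g)$. By the first part, condition A also triggers and returns $\operatorname{hard}(\ellv_g)$, the same vector. Lemma~\ref{lemma:scio} shows this equals $\operatorname{SC}_1(\ellv_g)$.
\item Only condition A triggers. The new decoder returns $\operatorname{hard}(\ellv_g)$. The original decoder continues Task 1 and returns $\operatorname{SC}_1(\ellv_g)$, which equals $\operatorname{hard}(\ellv_g)$ by Lemma~\ref{lemma:same}.
\item Neither criterion triggers. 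Both decoders return the output of conventional decoding.
\end{enumerate}
In all three cases the node's output $(\hat{\cv}_0\oplus\hat{\cv}_1\mid\hat{\cv}_1)$ coincides between the two variants and with SC.

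Finally, we extend this to the full tree by induction on node depth along the main path (Task 1). The hypothesis is that both variants return the SC estimate for every subtree. This ensures that $\hat{\cv}_0$, and hence $\ellv_g$, agree between the variants at each node, so the node-level case analysis applies. The main subtlety lies in this bookkeeping. The case analysis presupposes that both decoders see the same $\ellv_g$. That requires the left recursion to be exact, which is itself an instance of the claim, together with \cite[Theorem~1]{seah2026specsc} for the original variant. A second point to check is that Lemma~\ref{lemma:scio} applies to the right component code $\Cc_1$ decoded by the same leaf decoders $\Dc$ (FSSC rather than plain SC). We would argue this by noting that the FSSC leaf decoders are ML/SC-equivalent on their nodes, so a hard-decision input that is already a codeword is returned unchanged.
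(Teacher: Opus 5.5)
Your proposal is correct and follows the paper's proof. Lemma~\ref{lemma:trigger} gives the inclusion of trigger events, and the only case where the outputs could differ, $\tilde{\cv}_1\neq\operatorname{hard}(\ellv_g)\in\Cc_1$, is resolved by Lemma~\ref{lemma:same}. Your induction over the recursion and your remark that the FSSC leaf decoders also return $\operatorname{hard}(\ellv)$ when it is a codeword spell out points the paper leaves implicit (it relies on \cite[Theorem~1]{seah2026specsc}), but they do not change the argument.
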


\begin{proof}
From Lemma~\ref{lemma:trigger}, $\tilde{\cv}_1 = \operatorname{hard}(\ellv_g) \implies \operatorname{hard}(\ellv_g) \in \Cc_1$.
Clearly, the output of the decoders could only be different in the new case $\tilde{\cv}_1 \ne \operatorname{hard}(\ellv_g)$ and $\operatorname{hard}(\ellv_g) \in \Cc_1$.
However, from Lemma~\ref{lemma:same}, $\hat{\cv}_1 = \operatorname{hard}(\ellv_g)= \operatorname{SC}_1(\ellv_g)$, which is identical to the result obtained by recomputing the second path.
\end{proof}

Note that, when employing only \eqref{eq:cwcondition},
the outcome $\tilde{\cv}_1$ of the speculative branch is not needed and the corresponding computation can be entirely omitted, reducing \ac{Spec-SC} to the per-node ``check-before-decoding'' strategy \cite{yoo2016checkconstituent, choi2017recusivesyndrome, kim2018earlystopping}.
Depending on the time required to evaluate the condition, it may still be beneficial to start decoding the right branch in parallel to the check. 

\subsection{Distance-based Acceptance}
The second issue, i.e., failing to detect correct decoding of the speculative branch, can be mitigated by weakening the strictness of \eqref{eq:ryanscondition}.
Especially when the right subtree has a modest amount of redundancy (many frozen bits), its error-correction capability ensures that even if $\operatorname{hard}(\ellv_g)$ is not exactly $\tilde{\cv}_1$, \ac{SC} decoding $\ellv_g$ likely results again in the speculative result $\tilde{\cv}_1$, so it should be accepted.
Therefore, we propose the following acceptance criteria based on the \ac{HD} and \ac{ED} \cite{valembois2004} between $\ellv_g$ and $\tilde{\cv}_1$
\begin{align}
\bigg(\sum\limits_{i}\tilde{c}_{1,i} \ne \operatorname{hard}(\ell_{g,i})\bigg) \stackrel{?}{\le} t_{\mathrm{H},\nu}\label{eq:hd}, \\
\bigg(\sum\limits_{i:\,\tilde{c}_{1,i} \ne \operatorname{hard}(\ell_{g,i})} |\ell_{g,i}| \bigg) \stackrel{?}{\le} t_{\mathrm{E},\nu},\label{eq:ed}
\end{align}
where $t_{\mathrm{H},\nu}$ and $t_{\mathrm{E},\nu}$ denote the acceptance radii of node $\nu$, which should be chosen according to the error-correcting capability of the subtree and the target operating point.
In general, increasing $t_{\mathrm{H},\nu}$ or $t_{\mathrm{E},\nu}$ is expected to make the decoder more optimistic, potentially yielding larger speed-ups, but also increasing the likelihood of accepting incorrect results and, consequently, degrading the error-correction performance. For $t_{\mathrm{H},\nu}=0$ (or $t_{\mathrm{E},\nu}=0$), the condition is equivalent to the original acceptance criterion \eqref{eq:ryanscondition}.
We will refer to the distance-based criteria as \emph{condition B}.

\begin{algorithm}[tbh]
	\caption{Speculative successive cancellation decoding with improved acceptance criteria.}\label{alg:specsci}
	\SetAlgoVlined
	\LinesNumbered
	\SetKwInOut{Input}{Input}\SetKwInOut{Output}{Output}
	\Input{Channel LLR vector $\ellv$, rate profile $\rv$, set of leaf decoders $\Dc$, acceptance thresholds $\tv_\mathrm{E}$.}
	\Output{Codeword estimate $\hat{\cv}$.}
	$\hat{\cv} = \operatorname{Spec-SC}(\ellv,\,\rv,\,0)$\;
	\algrule[.5pt]
	\Fn{$\operatorname{Spec-SC}(\ellv,\,\rv,\,\nu)$} {
		$N \gets \operatorname{length}(\ellv)$\;
		\eIf{leaf decoder for $\rv$ in $\Dc$}{
			\Return $\operatorname{decode}(\ellv, \rv)$ \;
		} {
			\Parallel{
				\Task{\normalfont 1}{
                    $\ellv_f = f(\ellv_{0:\nicefrac{N}{2}},\,\ellv_{\nicefrac{N}{2}:N})$\;
					$\hat{\cv}_0 \gets \operatorname{Spec-SC}(\ellv_f,\,\rv_{0:\nicefrac{N}{2}},\,2\nu+1)$\;
                    $\ellv_g \gets g(\ellv_{0:\nicefrac{N}{2}},\,\ellv_{\nicefrac{N}{2}:N},\, \hat{\cv}_0)$\;
                    \KwSet flag$_1$\;
					$\hat{\cv}_1 \gets \operatorname{Spec-SC}(\ellv_g,\,\rv_{\nicefrac{N}{2}:N},\,2\nu+2)$\;
                    \KwCancel Task 2, 3, 4\;
                    \Return $(\hat{\cv}_0 \oplus \hat{\cv}_1 \mid \hat{\cv}_1)$\;
				}
                \Task{\normalfont 2}{
					$\tilde{\cv}_1 \gets \operatorname{FSSC}( \ellv_{\nicefrac{N}{2}:N},\,\rv_{\nicefrac{N}{2}:N})$\;
                    \KwSet flag$_2$\;
				}
                \Task{\normalfont 3}{
                    \KwAwait ($\text{flag}_1$)\;
                    \If{$\operatorname{hard}(\ellv_g) \in \Cc(\rv_{\nicefrac{N}{2}:N}) $}{
            			\KwCancel Task 1, 2, 4\;
						\Return $(\hat{\cv}_0 \oplus \operatorname{hard}(\ellv_g) \mid \operatorname{hard}(\ellv_g))$\;
            		} 
				}
                \Task{\normalfont 4}{
                    \KwAwait ($\text{flag}_1 \wedge \text{flag}_2$)\;
                    \If{$\sum_{i:\,\tilde{c}_{1,i} \ne \operatorname{hard}(\ell_{g,i})} |\ell_{g,i}| \le t_{\mathrm{E},\nu}$}{ \label{algoline:crit}
                        \KwCancel Task 1, 2\;
						\Return $(\hat{\cv}_0 \oplus \tilde{\cv}_1 \mid \tilde{\cv}_1)$\;
					}
				}
			}
		}
	}
\end{algorithm}

Algorithm~\ref{alg:specsci} lists the pseudo-code for \ac{Spec-SC} with the two proposed acceptance criteria. The node indices $\nu$ are defined in breadth-first natural order, starting at the root node $\nu=0$. Note that $\tv_\mathrm{E}$ and line~\ref{algoline:crit} may be replaced by $\tv_\mathrm{H}$ and \eqref{eq:hd} for \ac{HD}-based detection, respectively.
Again, for minimal latency, the two criteria are implemented as parallel tasks.

\section{Latency Analysis and Threshold Optimization}
\subsection{Latency}
To compute the latency of the proposed decoding algorithm, we consider different cases based on the trigger probabilities of conditions A and B, and the latency of the constituent decoders, as shown in Fig.~\ref{fig:timing}.
At each node, let $\tau_0$ and $\tau_1$ denote the latency of conventional left and right \ac{FSSC} subdecoders, respectively, and $\tau_0^*$ and $\tau_1^*$ the latency of recursive decoders with a speculation or early termination feature.
Moreover, let $\tau_f$ and $\tau_g$ denote the latency of the $f$ and $g$ functions, respectively, and $\tau_A$ and $\tau_B$ that of the conditions $A$ and $B$, respectively.
For conventional \ac{FSSC} decoding, the latency of a node is computed as
\begin{equation}
    \tau_\mathrm{SC} = \tau_f + \tau_0 + \tau_g + \tau_1.
\end{equation}
The latency reduction of the proposed methods depends on the trigger probability of the acceptance condition at each node.
For code membership-based acceptance condition, the expected latency reduction can be computed as 
\begin{equation}
    \overline{\Delta\tau} = \tau_1 \cdot \operatorname{Pr}( \operatorname{hard}(\ellv_g) \in \Cc_1) - \tau_\mathrm{A,\Cc_1},
\end{equation}
where  $\tau_\mathrm{A,\Cc_1}$ is the latency overhead associated with code membership test for $\Cc_1$ itself.
Similarly, for nodes accepting the a speculative right branch decoding result, the latency is
\begin{equation}
    \tau_{\text{Spec-SC},\mathrm{B}} = \max(\tau_f + \tau_0 + \tau_g,\; \tau_1) + \tau_\mathrm{B},
\end{equation}
so the expected speedup can be computed as
\begin{equation}
    \overline{\Delta\tau} = (\tau_\mathrm{SC} - \tau_{\text{Spec-SC},\mathrm{B}}) \cdot \Pr(\mathrm{B}),
\end{equation}
where $\Pr(\mathrm{B})$ is the trigger probability of the used acceptance criterion.
In the following, we consider a simplified latency model in which only tree traversals are counted, i.e., $\tau_f = 1$. This is justified by fast \ac{SC} decoding architectures that precompute the $g$-function for both potential outcomes of $\hat{\cv}_0$ in parallel to the $f$-function, so that $\tau_g = 0$ \cite{zhang2012reducedlatency}. We further assume the checks to have negligible latency, i.e., $\tau_\mathrm{A}=\tau_\mathrm{B} = 0$, based on the availability of low-latency syndrome check algorithms \cite{choi2017recusivesyndrome}.
A more nuanced latency analysis would need to account for actual hardware implementations, which is beyond the scope of this paper. The numbers reported here are intended to indicate the maximal achievable gains of the proposed methods.

\subsection{Threshold Optimization}
Let $p_{\mathrm{e},\nu}(t)$ denote the probability of an error being introduced at node $\nu$ due to a wrong decoding result of the speculative right branch being accepted, i.e., condition B triggering falsely.
We want to optimize the thresholds such that the expected latency is reduced the most while limiting such newly introduced errors.
In general, the latency reduction and the error events are correlated, and therefore, joint optimization is challenging. To simplify the problem, we assume that speedups and new errors are independent. Since real, correlated error events can only reduce these newly introduced errors, this assumption guarantees that the actual \ac{BLER} degradation never exceeds the given threshold.
For a given \ac{SNR} operating point, we formulate the optimization problem as
\begin{equation}
    \begin{split}
    &\underset{t_0, \dots t_{N-2}}{\text{maximize}} \; \sum_\nu \overline{\Delta\tau}_\nu(t_\nu) \\
    &\text{subject to}\; \sum_\nu p_{\mathrm{e},\nu}(t_\nu) \le \epsilon,
    \end{split}
\end{equation}
where $\epsilon$ is the maximum allowed \ac{BLER} performance degradation.
Similarly to the optimization problem in \cite{pillet2026stopping}, this can be treated a discrete resource allocation problem given a dataset of error events and observed distances. This problem can be solved efficiently using dynamic programming \cite{Bellman1957}.

\section{Numerical Results}
We consider $(4096,K)$ polar codes with code rates $R\in\{\nicefrac{1}{2},\nicefrac{3}{4},\nicefrac{7}{8}\}$ designed using Gaussian approximation~\cite{Trifonov2012Efficient} at design \ac{SNR} of $3\,\text{dB}$, over the \ac{BI-AWGN} channel.
The distance-based acceptance criterion is applied only at nodes with positive potential speed-up $\tau_1>\tau_{\mathrm{B}}$ and the acceptance thresholds have been optimized at each \ac{SNR} point to not increase the error rate by more than $50\,\%$, i.e., $\epsilon = 0.5 \cdot \text{BLER}$. All decoders use the fast nodes $\Dc = \{\text{Rate-0}, \text{REP}, \text{SPC}, \text{Rate-1}\}$.
\subsection{Error-rate Performance}
\begin{figure}[t]
    \centering
    \begin{tikzpicture}
\begin{axis}[
    width=\linewidth,
    height=.75\linewidth,
    legend columns=1,
    legend style={
        font=\footnotesize,
        fill opacity=0.75,
        draw opacity=1,
        text opacity=1,
    },
    legend cell align={left},
    ylabel={Block error rate (BLER)},
    xlabel={$E_\mathrm{b}/N_0$ in dB},
    xmajorgrids,
	yminorticks=true,
	ymajorgrids,
    grid style={dotted,dgray},
    ymin=1e-4,
	ymax=1.05,
    ymode=log,
    xmax=5.5,
    xmin=1
]

\addplot [black, line width=1pt, mark=none] 
table[col sep=comma]{%
0.50, 1.000e+00
0.75, 9.962e-01
1.00, 9.560e-01
1.25, 8.492e-01
1.50, 5.531e-01
1.75, 2.437e-01
2.00, 7.699e-02
2.25, 1.813e-02
2.50, 2.295e-03
2.75, 2.312e-04
3.00, 1.260e-05
};
\addlegendentry{SC / Spec-SC \cite{seah2026specsc} / A \cite{yoo2016checkconstituent}}

\addplot [purple, line width=1pt, mark=square] 
table[col sep=comma]{%
1.00, 1.000e+00
1.25, 1.000e+00
1.50, 8.119e-01
1.75, 3.442e-01
2.00, 1.112e-01
2.25, 2.076e-02
2.50, 2.837e-03
2.75, 3.202e-04
3.00, 1.552e-05
};
\addlegendentry{Spec-SC A+B (HD)}

\addplot [mblue, line width=1pt, mark=o] 
table[col sep=comma]{%
1.00, 1.000e+00
1.25, 1.000e+00
1.50, 8.248e-01
1.75, 3.679e-01
2.00, 1.118e-01
2.25, 2.194e-02
2.50, 2.791e-03
2.75, 3.193e-04
3.00, 1.709e-05
};
\addlegendentry{Spec-SC A+B (ED)}

\addplot [black, line width=1pt, mark=none] 
table[col sep=comma]{%
2.00, 9.961e-01
2.50, 6.537e-01
3.00, 7.707e-02
3.50, 2.350e-03
4.00, 5.815e-05
};

\addplot [purple, line width=1pt, mark=square] 
table[col sep=comma]{%
2.00, 1.000e+00
2.50, 8.938e-01
3.00, 8.714e-02
3.50, 2.840e-03
4.00, 6.077e-05
};

\addplot [mblue, line width=1pt, mark=o] 
table[col sep=comma]{%
2.00, 1.000e+00
2.50, 9.353e-01
3.00, 9.573e-02
3.50, 2.823e-03
4.00, 8.295e-05
};

\addplot [black, line width=1pt, mark=none] 
table[col sep=comma]{%
3.00, 9.924e-01
3.50, 5.614e-01
4.00, 7.576e-02
4.50, 5.029e-03
5.00, 3.252e-04
5.50, 2.305e-05
};

\addplot [purple, line width=1pt, mark=square] 
table[col sep=comma]{%
3.00, 1.000e+00
3.50, 8.106e-01
4.00, 9.701e-02
4.50, 6.118e-03
5.00, 3.811e-04
5.50, 5.151e-05
};

\addplot [mblue, line width=1pt, mark=o] 
table[col sep=comma]{%
3.00, 1.000e+00
3.50, 7.957e-01
4.00, 1.166e-01
4.50, 7.328e-03
5.00, 4.387e-04
5.50, 4.684e-05
};

\node[rotate=-75] at (axis cs:2.35, 1e-3) {\footnotesize $(4096,2048)$};
\node[rotate=-70] at (axis cs:3.4, 1e-3) {\footnotesize $(4096,3072)$};
\node[rotate=-65] at (axis cs:4.6, 1e-3) {\footnotesize $(4096,3584)$};

\end{axis}

\end{tikzpicture}
    \caption{\footnotesize \Ac{BLER} performance of \ac{Spec-SC} with \ac{HD}- and \ac{ED}-based acceptance conditions vs. conventional \ac{SC} decoding.}
    \label{fig:bler}
\end{figure}
Fig.~\ref{fig:bler} compares the \ac{BLER} performance of \ac{Spec-SC} with the proposed acceptance criteria to conventional \ac{SC} decoding. The conventional \ac{SC} curve coincides with \ac{Spec-SC} using the original condition or only using condition A.
As expected, the curve of the proposed algorithm follows conventional decoding closely, well within the allowed performance degradation of $50\,\%$.

\subsection{Trigger Probabilities}
\begin{figure*}[!htb]
    \centering
    \resizebox{\linewidth}{!}{\begin{tikzpicture}[
  every node/.style={draw, rectangle, align=center, font=\footnotesize, minimum width=2.1cm, inner sep=2pt},
  dspconn/.style={
        -{Triangle[length=1.4mm, width=1mm]}, %
        line cap=round,
        line join=round
      }
]
  \node (n0) at (14.37,0.00) {$(4096,3584)$ \\ $\Pr(\mathrm{A})  = \mathbf{0.17}$ \\ $\Pr(\mathrm{B})  = 0.00$ \\ $\Pr(\mathrm{AB}) = 0.00$ \\ $t_{\mathrm{E},0} = 0.00$};
  \node (n1) at (7.76,-2.10) {$(2048,1586)$ \\ $\Pr(\mathrm{A})  = \mathbf{0.04}$ \\ $\Pr(\mathrm{B})  = 0.00$ \\ $\Pr(\mathrm{AB}) = 0.00$ \\ $t_{\mathrm{E},1} = 22.47$};
  \node (n2) at (20.99,-2.10) {$(2048,1998)$ \\ $\Pr(\mathrm{A})  = \mathbf{0.99}$ \\ $\Pr(\mathrm{B})  = 0.58$ \\ $\Pr(\mathrm{AB}) = 0.58$ \\ $t_{\mathrm{E},2} = 0.00$};
  \node (n3) at (3.45,-4.20) {$(1024,622)$ \\ $\Pr(\mathrm{A})  = 0.01$ \\ $\Pr(\mathrm{B})  = \mathbf{0.05}$ \\ $\Pr(\mathrm{AB}) = 0.00$ \\ $t_{\mathrm{E},3} = 16.18$};
  \node (n4) at (12.07,-4.20) {$(1024,964)$ \\ $\Pr(\mathrm{A})  = \mathbf{0.97}$ \\ $\Pr(\mathrm{B})  = 0.48$ \\ $\Pr(\mathrm{AB}) = 0.47$ \\ $t_{\mathrm{E},4} = 4.27$};
  \node (n5) at (18.97,-4.20) {$(1024,975)$ \\ $\Pr(\mathrm{A})  = \mathbf{0.99}$ \\ $\Pr(\mathrm{B})  = 0.57$ \\ $\Pr(\mathrm{AB}) = 0.57$ \\ $t_{\mathrm{E},5} = 0.00$};
  \node (n6) at (23.00,-4.20) {$(1024,1023)$};
  \node (n7) at (1.15,-6.30) {$(512,195)$ \\ $\Pr(\mathrm{A})  = 0.00$ \\ $\Pr(\mathrm{B})  = \mathbf{0.47}$ \\ $\Pr(\mathrm{AB}) = 0.00$ \\ $t_{\mathrm{E},7} = 23.31$};
  \node (n8) at (5.75,-6.30) {$(512,427)$ \\ $\Pr(\mathrm{A})  = \mathbf{0.86}$ \\ $\Pr(\mathrm{B})  = 0.55$ \\ $\Pr(\mathrm{AB}) = 0.50$ \\ $t_{\mathrm{E},8} = 3.78$};
  \node (n9) at (10.35,-6.30) {$(512,455)$ \\ $\Pr(\mathrm{A})  = \mathbf{0.95}$ \\ $\Pr(\mathrm{B})  = 0.62$ \\ $\Pr(\mathrm{AB}) = 0.61$ \\ $t_{\mathrm{E},9} = 2.14$};
  \node (n10) at (13.80,-6.30) {$(512,509)$};
  \node (n11) at (17.25,-6.30) {$(512,464)$ \\ $\Pr(\mathrm{A})  = \mathbf{0.98}$ \\ $\Pr(\mathrm{B})  = 0.74$ \\ $\Pr(\mathrm{AB}) = 0.73$ \\ $t_{\mathrm{E},11} = 2.50$};
  \node (n12) at (20.70,-6.30) {$(512,511)$};
  \node (n15) at (0.00,-8.40) {$(256,39)$ \\ $\Pr(\mathrm{A})  = 0.00$ \\ $\Pr(\mathrm{B})  = \mathbf{0.94}$ \\ $\Pr(\mathrm{AB}) = 0.00$ \\ $t_{\mathrm{E},15} = 33.96$};
  \node (n16) at (2.30,-8.40) {$(256,156)$ \\ $\Pr(\mathrm{A})  = 0.55$ \\ $\Pr(\mathrm{B})  = \mathbf{0.71}$ \\ $\Pr(\mathrm{AB}) = 0.43$ \\ $t_{\mathrm{E},16} = 8.89$};
  \node (n17) at (4.60,-8.40) {$(256,181)$ \\ $\Pr(\mathrm{A})  = \mathbf{0.76}$ \\ $\Pr(\mathrm{B})  = 0.68$ \\ $\Pr(\mathrm{AB}) = 0.56$ \\ $t_{\mathrm{E},17} = 6.71$};
  \node (n18) at (6.90,-8.40) {$(256,246)$};
  \node (n19) at (9.20,-8.40) {$(256,204)$ \\ $\Pr(\mathrm{A})  = \mathbf{0.90}$ \\ $\Pr(\mathrm{B})  = 0.73$ \\ $\Pr(\mathrm{AB}) = 0.68$ \\ $t_{\mathrm{E},19} = 3.42$};
  \node (n20) at (11.50,-8.40) {$(256,251)$};
  \node (n23) at (16.10,-8.40) {$(256,211)$ \\ $\Pr(\mathrm{A})  = \mathbf{0.97}$ \\ $\Pr(\mathrm{B})  = 0.90$ \\ $\Pr(\mathrm{AB}) = 0.88$ \\ $t_{\mathrm{E},23} = 3.95$};
  \node (n24) at (18.40,-8.40) {$(256,253)$};

  \draw[dspconn](n0) -- (n1);
  \draw[dspconn](n0) -- (n2);
  \draw[dspconn](n1) -- (n3);
  \draw[dspconn](n1) -- (n4);
  \draw[dspconn](n2) -- (n5);
  \draw[dspconn](n2) -- (n6);
  \draw[dspconn](n3) -- (n7);
  \draw[dspconn](n3) -- (n8);
  \draw[dspconn](n4) -- (n9);
  \draw[dspconn](n4) -- (n10);
  \draw[dspconn](n5) -- (n11);
  \draw[dspconn](n5) -- (n12);
  \draw[dspconn](n7) -- (n15);
  \draw[dspconn](n7) -- (n16);
  \draw[dspconn](n8) -- (n17);
  \draw[dspconn](n8) -- (n18);
  \draw[dspconn](n9) -- (n19);
  \draw[dspconn](n9) -- (n20);
  \draw[dspconn](n11) -- (n23);
  \draw[dspconn](n11) -- (n24);
\end{tikzpicture}}
    \caption{\footnotesize Empirical trigger probabilities of conditions A and B (\ac{ED}) in \ac{Spec-SC} decoding the $(4096,3584)$ polar code at $E_\mathrm{b}/N_0 = 4.5\,\text{dB}$.}
    \label{fig:trigger}
\end{figure*}
\begin{figure*}[!htb]
    \centering
        \resizebox{\linewidth}{!}{\input{fig/latency_n4096}}
    \caption{\footnotesize Average decoding latency (serially dependent visited nodes) of different \ac{Spec-SC} configurations vs. conventional \ac{FSSC} decoding.}
    \label{fig:latency}
\end{figure*}

Fig.~\ref{fig:trigger} shows a partial decoding tree of the $(4096,3584)$ polar code, with the trigger probabilities of conditions A and B at each node, evaluated at $E_\mathrm{b}/N_0 = 4.5\,\text{dB}$ for the given \ac{ED} thresholds $t_{\mathrm{E},\nu}$. $\Pr(\mathrm{AB})$ denotes the probability of both conditions being satisfied simultaneously, which is equivalent to the trigger probability of the original acceptance criterion \eqref{eq:ryanscondition}. Code membership proves to be an effective early termination criterion for the high-rate nodes, whereas the lower-rate nodes benefit more from the distance-based acceptance condition.

\subsection{Decoding Latency}
Fig.~\ref{fig:latency} shows the average total latency in terms of serially dependent node traversals for the different decoding configurations with respect to \ac{SNR}.
As a baseline, we also consider a genie-based acceptance criterion ($\hat{\cv}_1=\tilde{\cv}_1$) that marks the maximal speed-up without error-rate performance degradation.
The qualitative behavior is the same across all code rates, while the achievable gains are generally higher for higher code rates.
Both distance-based criteria closely track the genie acceptance, indicating that the proposed scheme is essentially optimal, with the soft \ac{ED} outperforming the simpler \ac{HD}-based condition at low \ac{SNR}.
Without code membership checking, latency plateaus at the combined latency of all normal \ac{FSSC} right branches (right-dominated late termination).
At \acp{SNR} corresponding to \ac{BLER} $=10^{-3}$, it achieves throughput gains of $2.2\times$, $2.7\times$, and $3.2\times$ over \ac{FSSC} and $1.1\times$, $1.2\times$, and $1.4\times$ over original \ac{Spec-SC} for the rates $\nicefrac{1}{2}$, $\nicefrac{3}{4}$, and $\nicefrac{7}{8}$, respectively.
The code membership check (condition A) produces a further $1.9\times$, $1.7\times$, and $1.7\times$ throughput increase, respectively.
\begin{figure}[htb]
    \centering
    \begin{tikzpicture}
\begin{axis}[
    width=\linewidth,
    height=.8\linewidth,
    legend columns=1,
    legend style={
        fill opacity=0.75,
        draw opacity=1,
        text opacity=1,
        inner sep=2pt,
        /tikz/every node/.append style={scale=0.75}
    },
    legend cell align={left},
    legend pos={south east},
    xlabel={Latency $\tau$ (serially dependent visited nodes)},
    ylabel={Fraction of Samples $\leq \tau$},
    xmajorgrids,
	yminorticks=true,
	ymajorgrids,
    grid style={dotted,dgray},
    ymin=0,
	ymax=1,
    xmax=240,
    xmin=40
]

\addplot[black, thick, mark=none] coordinates {(226,0)(226,1)};
\addlegendentry{FSSC}
\addplot[ured, thick, mark=diamond, mark repeat=8, const plot mark right] coordinates {(85,0)(85,0.001)(91,0.0015)(92,0.002)(93,0.0025)(95,0.003)(98,0.0045)(99,0.005)(100,0.0055)(101,0.0065)(102,0.008)(103,0.009)(104,0.0105)(105,0.012)(106,0.0145)(107,0.017)(108,0.0195)(109,0.022)(110,0.0275)(111,0.0315)(112,0.0355)(113,0.039)(114,0.043)(115,0.05)(116,0.0575)(117,0.0645)(118,0.0705)(119,0.0825)(120,0.087)(121,0.0925)(122,0.1035)(123,0.1135)(124,0.125)(125,0.137)(126,0.147)(127,0.159)(128,0.1735)(129,0.1855)(130,0.2005)(131,0.214)(132,0.228)(133,0.247)(134,0.2615)(135,0.2765)(136,0.295)(137,0.3165)(138,0.33)(139,0.3455)(140,0.371)(141,0.3845)(142,0.406)(143,0.4265)(144,0.4475)(145,0.4665)(146,0.4835)(147,0.4985)(148,0.521)(149,0.5415)(150,0.559)(151,0.5735)(152,0.591)(153,0.611)(154,0.629)(155,0.6435)(156,0.6615)(157,0.6835)(158,0.703)(159,0.7225)(160,0.732)(161,0.746)(162,0.766)(163,0.7815)(164,0.7935)(165,0.8075)(166,0.819)(167,0.8335)(168,0.8415)(169,0.854)(170,0.864)(171,0.874)(172,0.8825)(173,0.892)(174,0.901)(175,0.913)(176,0.919)(177,0.9275)(178,0.9295)(179,0.9365)(180,0.9405)(181,0.946)(182,0.952)(183,0.9545)(184,0.9605)(185,0.9665)(186,0.9695)(187,0.9725)(188,0.976)(189,0.978)(190,0.98)(191,0.981)(192,0.982)(193,0.983)(194,0.985)(195,0.9855)(196,0.9865)(197,0.9875)(198,0.9885)(199,0.989)(200,0.9905)(202,0.991)(203,0.9915)(204,0.9935)(205,0.994)(206,0.9945)(208,0.9955)(209,0.9965)(210,0.997)(211,0.9975)(212,0.9985)(215,0.999)(216,0.9995)(218,1)};
\addlegendentry{Spec-SC \cite{seah2026specsc}}

\addplot[mblue, thick, dashed, mark=o, mark options={solid}, mark repeat=8, const plot mark right] coordinates {(69,0)(69,0.025)(70,0.029)(71,0.0745)(72,0.1065)(73,0.141)(74,0.177)(75,0.2095)(76,0.221)(77,0.2325)(78,0.24)(79,0.259)(80,0.2655)(81,0.278)(82,0.323)(83,0.3385)(84,0.3865)(85,0.4305)(86,0.4735)(87,0.512)(88,0.547)(89,0.571)(90,0.5925)(91,0.61)(92,0.643)(93,0.661)(94,0.6855)(95,0.7225)(96,0.745)(97,0.7645)(98,0.7885)(99,0.809)(100,0.821)(101,0.831)(102,0.8485)(103,0.8545)(104,0.8655)(105,0.878)(106,0.889)(107,0.9005)(108,0.912)(109,0.919)(110,0.9245)(111,0.93)(112,0.9335)(113,0.9345)(114,0.9375)(115,0.94)(116,0.9445)(117,0.951)(118,0.9545)(119,0.958)(120,0.9615)(121,0.9645)(122,0.966)(123,0.9675)(124,0.969)(125,0.97)(126,0.9715)(127,0.9735)(128,0.975)(129,0.9765)(130,0.9785)(131,0.9795)(132,0.9805)(134,0.9815)(135,0.984)(137,0.985)(139,0.9855)(140,0.986)(141,0.987)(142,0.9875)(143,0.989)(144,0.9895)(145,0.99)(146,0.9915)(151,0.992)(152,0.993)(153,0.994)(156,0.9945)(158,0.995)(159,0.996)(163,0.9965)(164,0.9975)(165,0.9985)(166,0.999)(167,0.9995)(178,1)};
\addlegendentry{Spec-SC B (ED)}

\addplot[mgray, thick, dashed, mark=*, mark size=1pt, mark options={solid}, mark repeat=8, const plot mark right] coordinates {(69,0)(69,0.0315)(70,0.036)(71,0.085)(72,0.121)(73,0.1595)(74,0.1965)(75,0.2295)(76,0.2395)(77,0.248)(78,0.2525)(79,0.274)(80,0.2785)(81,0.2935)(82,0.342)(83,0.356)(84,0.4135)(85,0.4625)(86,0.509)(87,0.5525)(88,0.589)(89,0.614)(90,0.634)(91,0.643)(92,0.6775)(93,0.6915)(94,0.717)(95,0.76)(96,0.7775)(97,0.8075)(98,0.833)(99,0.851)(100,0.8595)(101,0.8655)(102,0.879)(103,0.8855)(104,0.8975)(105,0.908)(106,0.9155)(107,0.9255)(108,0.9335)(109,0.9395)(110,0.942)(111,0.945)(112,0.9475)(113,0.951)(114,0.9565)(115,0.9605)(116,0.9655)(117,0.97)(118,0.9705)(119,0.9725)(120,0.973)(121,0.974)(122,0.9765)(124,0.979)(125,0.9805)(127,0.9825)(128,0.9845)(131,0.9855)(132,0.986)(133,0.9875)(134,0.9885)(136,0.989)(137,0.9895)(138,0.99)(139,0.9905)(140,0.991)(141,0.9925)(142,0.9935)(145,0.994)(147,0.9945)(151,0.9955)(152,0.996)(153,0.9965)(154,0.997)(156,0.9975)(157,0.998)(160,0.999)(161,0.9995)(162,1)};
\addlegendentry{Spec-SC Genie}

\addplot[green, thick, mark=x, mark repeat=8, const plot mark right] coordinates {(41,0)(41,0.0005)(42,0.001)(43,0.0015)(45,0.0025)(46,0.004)(47,0.006)(48,0.0065)(49,0.0085)(50,0.012)(51,0.0145)(52,0.019)(53,0.023)(54,0.0275)(55,0.0355)(56,0.0435)(57,0.0585)(58,0.07)(59,0.0875)(60,0.104)(61,0.1135)(62,0.1285)(63,0.15)(64,0.1705)(65,0.1955)(66,0.225)(67,0.2475)(68,0.269)(69,0.2975)(70,0.3245)(71,0.3525)(72,0.387)(73,0.417)(74,0.4515)(75,0.481)(76,0.5145)(77,0.543)(78,0.5685)(79,0.6015)(80,0.627)(81,0.6535)(82,0.6795)(83,0.7045)(84,0.7245)(85,0.745)(86,0.7665)(87,0.785)(88,0.7995)(89,0.82)(90,0.833)(91,0.8465)(92,0.8595)(93,0.8695)(94,0.88)(95,0.889)(96,0.8955)(97,0.9025)(98,0.908)(99,0.9145)(100,0.919)(101,0.9235)(102,0.927)(103,0.9295)(104,0.933)(105,0.936)(106,0.941)(107,0.9425)(108,0.944)(109,0.945)(110,0.948)(111,0.95)(112,0.951)(113,0.954)(114,0.9555)(115,0.956)(116,0.9575)(117,0.9585)(118,0.9605)(119,0.9615)(120,0.963)(121,0.964)(122,0.9655)(123,0.966)(124,0.967)(127,0.969)(128,0.971)(129,0.9725)(130,0.9735)(132,0.9745)(133,0.9755)(134,0.977)(135,0.9775)(136,0.978)(137,0.979)(138,0.981)(139,0.9825)(140,0.983)(141,0.984)(142,0.985)(143,0.986)(145,0.987)(148,0.9875)(149,0.988)(151,0.9885)(152,0.99)(153,0.9905)(154,0.992)(156,0.9925)(157,0.993)(161,0.9935)(163,0.994)(165,0.9955)(166,0.9965)(169,0.997)(170,0.9975)(172,0.998)(175,0.9985)(177,0.999)(178,1)};
\addlegendentry{Spec-SC A}

\addplot[mblue, thick, solid, mark=o, mark options={solid}, mark repeat=8, const plot mark right] coordinates {(39,0)(39,0.0005)(42,0.001)(43,0.0015)(44,0.002)(45,0.004)(46,0.007)(47,0.0085)(48,0.009)(49,0.0115)(50,0.016)(51,0.0195)(52,0.0255)(53,0.0305)(54,0.039)(55,0.0515)(56,0.0655)(57,0.0875)(58,0.1065)(59,0.128)(60,0.1525)(61,0.1675)(62,0.187)(63,0.2115)(64,0.2395)(65,0.2675)(66,0.299)(67,0.33)(68,0.368)(69,0.404)(70,0.4345)(71,0.4725)(72,0.511)(73,0.542)(74,0.569)(75,0.6005)(76,0.6345)(77,0.6595)(78,0.685)(79,0.7125)(80,0.7355)(81,0.7555)(82,0.775)(83,0.7915)(84,0.8105)(85,0.8275)(86,0.841)(87,0.8495)(88,0.862)(89,0.8725)(90,0.8825)(91,0.889)(92,0.8965)(93,0.9015)(94,0.9085)(95,0.9145)(96,0.9175)(97,0.9215)(98,0.924)(99,0.9275)(100,0.929)(101,0.933)(102,0.9345)(103,0.937)(104,0.942)(105,0.943)(106,0.9445)(107,0.946)(108,0.948)(109,0.949)(110,0.95)(111,0.9505)(112,0.952)(113,0.953)(114,0.955)(115,0.9555)(116,0.9565)(117,0.9585)(118,0.9595)(119,0.96)(120,0.9615)(121,0.9625)(122,0.9635)(123,0.965)(124,0.966)(126,0.967)(127,0.9685)(128,0.97)(129,0.971)(130,0.9735)(131,0.9745)(133,0.975)(134,0.977)(135,0.978)(136,0.9785)(137,0.979)(138,0.981)(139,0.9815)(140,0.9825)(141,0.9845)(142,0.985)(143,0.9855)(144,0.9865)(145,0.988)(146,0.9885)(147,0.9895)(148,0.9905)(149,0.991)(151,0.9915)(152,0.992)(154,0.9925)(157,0.9935)(159,0.994)(160,0.9945)(161,0.995)(163,0.996)(164,0.9965)(168,0.997)(169,0.9975)(170,0.998)(171,0.9985)(174,0.999)(176,0.9995)(188,1)};
\addlegendentry{Spec-SC A+B (ED)}

\addplot[mgray, thick, solid, mark=*, mark size=1pt, mark options={solid}, mark repeat=8, const plot mark right] coordinates {(38,0)(38,0.0005)(39,0.001)(41,0.0015)(43,0.002)(44,0.003)(45,0.005)(46,0.0085)(47,0.0095)(48,0.01)(49,0.013)(50,0.018)(51,0.0215)(52,0.028)(53,0.034)(54,0.0475)(55,0.058)(56,0.073)(57,0.0965)(58,0.1175)(59,0.1415)(60,0.167)(61,0.186)(62,0.2095)(63,0.2375)(64,0.27)(65,0.303)(66,0.3395)(67,0.3785)(68,0.42)(69,0.458)(70,0.4955)(71,0.533)(72,0.5635)(73,0.598)(74,0.628)(75,0.6565)(76,0.683)(77,0.709)(78,0.737)(79,0.76)(80,0.781)(81,0.8035)(82,0.8185)(83,0.8335)(84,0.848)(85,0.859)(86,0.8725)(87,0.882)(88,0.892)(89,0.9005)(90,0.907)(91,0.914)(92,0.92)(93,0.9275)(94,0.9295)(95,0.9315)(96,0.934)(97,0.939)(98,0.942)(99,0.9435)(100,0.944)(101,0.9455)(102,0.9495)(103,0.951)(104,0.9545)(105,0.955)(106,0.957)(107,0.9585)(108,0.96)(110,0.9605)(111,0.961)(112,0.9615)(113,0.962)(114,0.9645)(115,0.965)(116,0.9665)(117,0.9675)(118,0.9685)(119,0.9705)(120,0.971)(121,0.9715)(123,0.973)(124,0.975)(125,0.9755)(126,0.9765)(127,0.9785)(129,0.9805)(130,0.981)(131,0.9825)(132,0.9835)(133,0.985)(135,0.9855)(136,0.986)(137,0.988)(138,0.9885)(139,0.989)(140,0.99)(141,0.991)(143,0.9915)(144,0.9925)(145,0.993)(148,0.9935)(149,0.9945)(150,0.995)(151,0.9955)(154,0.996)(156,0.9965)(157,0.997)(158,0.9975)(162,0.998)(163,0.9985)(164,0.999)(165,0.9995)(167,1)};
\addlegendentry{Spec-SC A+Genie}

\coordinate (m-sc)   at (axis cs:226,0.5);
\coordinate (m-spec) at (axis cs:147.1,0.5);
\coordinate (m-b)    at (axis cs:86.7,0.5);
\coordinate (m-ab)   at (axis cs:71.7,0.5);

\end{axis}

\draw[->, thick, dashed] (m-sc) -- (m-spec)
    node[midway, above=3pt, fill=white, fill opacity=0.8, text opacity=1] {34.9\,\%};
\draw[->, thick, dashed] (m-spec) -- (m-b)
    node[midway, above=3pt, fill=white, fill opacity=0.8, text opacity=1] {41.1\,\%};
\draw[->, thick, dashed] (m-b) -- (m-ab)
    node[midway, above=3pt, fill=white, fill opacity=0.8, text opacity=1] {17.3\,\%};
\end{tikzpicture}
    \caption{\footnotesize Empirical cumulative distribution function of the latency of different \ac{Spec-SC} configurations vs. \ac{FSSC} decoding for the $(4096,3584)$ polar code at $E_\mathrm{b}/N_0 = 4\,\text{dB}$.}
    \label{fig:latency_cdf}
\end{figure}
Fig.~\ref{fig:latency_cdf} shows a more detailed latency analysis for the $(4096,3584)$ code at $E_\mathrm{b}/N_0 = 4\,\text{dB}$ in the form of an empirical cumulative distribution function.
The gains from the respective methods (condition B, condition A+B) are relative reductions.
While the majority of the overall latency reduction is due to the code membership check (condition A), improving beyond this performance requires speculation.

\section{Conclusion and Outlook}
In this paper, we showed that the original acceptance criterion of \ac{Spec-SC} can be reformulated in terms of code membership, which results in an acceptance criterion with a strictly higher trigger probability. Moreover, we propose distance-based acceptance criteria that can be used to further speed-up \ac{Spec-SC} decoding, where thresholds can be adjusted to balance speed-up and \ac{BLER} performance.
Simulation results show that the proposed criteria closely follow the performance of optimal, genie-based acceptance.
Future work includes demonstrating these gains in real hardware implementation and developing an \ac{SNR}-independent threshold optimization.

\bibliographystyle{IEEEtran}
\bibliography{references.bib}
\end{NoHyper}
\end{document}